\newtheorem{remark}{Remark}
\newtheorem{theorem}{Theorem}[section]
\newtheorem{corollary}{Corollary}[section]
\newtheorem{proposition}{Proposition}[section]
\begin{document}
\title{The Control Theory \\ of Motion-Based Communication:\\  Problems in Teaching Robots to Dance}
\author{J.\ Baillieul 
\thanks{ John Baillieul is with the Boston University, email:
johnb@bu.edu.  Support for this work is gratefully acknowledged to ODDR\&E MURI07 Program Grant Number FA9550-07-1-0528, the National Science Foundation ITR Program Grant Number DMI-0330171, and the Office of Naval Research, and by ODDR\&E MURI10 Program Grant Number N00014-10-1-0952.}\& K.\ \"Ozcimder}
\IEEEaftertitletext{\vspace{-2\baselineskip}} 
\maketitle

\begin{abstract}
The paper describes results on two components of a research program focused on {\em motion-based communication} mediated by the dynamics of a control system.  Specifically we are interested in how mobile agents engaged in a shared activity such as dance can use motion as a medium for transmitting certain types of messages.   The first part of the paper adopts  the terminology of {\em motion description languages} and deconstructs an elementary form of the well-known popular dance, Salsa, in terms of four {\em motion primitives} (dance steps).  Several notions of dance complexity are introduced.  We describe an experiment in which ten performances by an actual pair of dancers are evaluated by judges and then compared in terms of proposed complexity metrics.  An energy metric is also defined.  Values of this metric are obtained by summing the lengths of motion segments executed by wheeled robots replicating the movements of the human dancers in each of the ten dance performances.  Of all the metrics that are considered in this experiment, energy is the most closely correlated with the human judges' assessments of performance quality.

The second part of the paper poses a general class of dual objective motion control problems in which a primary objective (artistic execution of a dance step or efficient movement toward a specified terminal state) is combined with a communication objective.  Solutions of varying degrees of explicitness can be given in several classes of problems of communicating through the dynamics of finite dimensional linear control systems.  In this setting it is shown that the cost of adding a communication component to motions that steer a system between prescribed pairs of states is independent of those states.   At the same time, the optimal encoding problem itself is shown to be a problem of packing geometric objects, and it remains open.  Current research is  aimed at solving such communication-through-action problems in the context of the motion control of mobile robots.
\end{abstract}

\begin{IEEEkeywords}
\noindent
action-mediated communication, control communication complexity, robot Salsa
\end{IEEEkeywords}

\section{Introduction}\setcounter{equation}{0}

In \cite{Wong} and \cite{WB} the first author, in collaboration with W.S.\ Wong, studied the concept of {\normalsize\em control  communication complexity} as a formal approach for studying a group of distributed agents exercising independent actions to achieve common goals.  For distributed cooperative systems, it is natural to expect that communication can help improve system performance, and adopting this viewpoint, we report research on the design of control laws whose purpose is to elicit system responses that convey messages to an observer or set of observers.  The idea that the motions of a member of a multiagent team might allow an observer to infer intent or some other meaning is not surprising.  There is, however, no well-developed theory of which we are aware to guide the design motion control laws to enable  mobile agents  to convey useful information to an observer.  Nevertheless, we point to some preliminary efforts in the work reported in \cite{DJ1},\cite{DJ2}, and \cite{DJ3}.  We also note the interesting work of Justh and  Krishnaprasad (\cite{krishna}) which describes the reverse problem of executing motions that cannot be detected by an observer.  

Perhaps the largest body of published research on communication problems involving teams of autonomous robotic agents has treated various aspects of multi-agent consensus ([2]-[6]).  With the exception of [6], this work has modeled communication among agents as occurring instantaneously, and none of the work has provided detailed models of the communication channels.  While it is both practical and convenient in many cases to assume that communication occurs over well understood RF and optical technologies that offer high data rates---allowing the assumptions of instantaneous data exchange and negligible expenditure of energy, our purpose in what follows below is to pose the problem of two or more mobile agents communicating by means of their motions.

This problem is important for a number of reasons.  While motion-based communication will typically have orders of magnitude less information bandwidth than standard RF or free space optical communication, it offers significant stealth potential.  There is perhaps a more important justification for a formal study of this communication mode, however, and that is that in many contexts, communication with observers is inevitable whenever mobile agents execute motions aimed at achieving some objective.  Athletes playing team sports communicate with both their own team mates and with players on the opposing team.  For their own team mates, they of course wish to signal their intentions as clearly and reliably as possible.  For opposing players, however, their goal may well be to communicate so as to make their intended next movement as obscure and unpredictable as possible.  Such non-linguistic communication has been acknowledged by, among others, W.\ Weaver, who wrote that communication involves ``not only  written and oral speech, but also music, the pictorial arts, the theatre, the ballet, and in fact all human behavior.'' (\cite{ShannonWeaver})  In the present paper, dance is the motivating application for our discussion of action-mediated communication.  Section 2 describes a simple form of the popular dance, Salsa.  In Section 3, we describe an experiment in which video recordings of a pair of dancers performing ten basic routines were analyzed and scored by twenty different judges.  Several metrics of performance complexity are proposed.  Section 4 analyzes the dance performances in terms of their deconstruction into eight-bar phrases.  A figure of merit based on energy is also introduced.  In Section 5 we pose the problem of controlling robotic dancers such that they are able to exchange messages through the motions that they execute.  The section explores several classes of problems of communicating through the dynamics of finite dimensional linear control systems in which the systems are operating so as to satisfy terminal endpoint constraints while encoding messages in the system output.   The message encoding problem is shown to be equivalent to a problem of packing geometric objects, and at this writing it remains open.

\section{Salsa --- A prototype problem in control-mediated communication}\setcounter{equation}{0}

The universal set-up in studying (motion) control-mediated communication is a finite set of control or motion primitives in terms of which all communication takes place.  There are many domains of human endeavor in which one can look for motion primitives.  There are, for instance, standard driving protocols for cars entering street intersections, standard approach paths for aircraft landing at airports, and standard movements in competitive athletics.  Among these, it is perhaps most natural to think about motion-mediated communication in the context of dance. Throughout many cultures, dance consists of sequences of body movements that are known to expert dancers, and passed on through formal instruction to beginners and students.  The artistic content of formalized movements that occur in dance is central to what must be expressed in the motion-based language associated with each dance vernacular.  It would seem natural, then, to develop a formal means of transcribing basic motion primitives for dance, but attempts to do this have not led to widespread use among dance professionals.  Perhaps the best known effort in this direction was the development in the 1920 of {\em labanotation}. (\cite{Laban},\cite{Hutchinson}).  Rolf Von Laban attempted to develop a scripting language that was sufficiently expressive that all human movement could be described and recorded on paper.  This has never been widely used, probably because in its attempt to be universally applicable, it became complex and nonintuitive.  (This is supported by the ``more than 700 symbols that indicate parts of the body, direction, levels, and types of movement and the durations of each action.''  (Quoted from the web page \cite{LabanWriter}.)

In this paper, we avoid dealing with such high complexity by restricting our attention to a form of dance involving only a small set of motion primitives---beginner's Salsa.  As is always the case in the performing arts, there are distinct levels of proficiency in Salsa.  Because our goal is to analyze and deconstruct component motions in order to reinterpret them as controlled motions of simple wheeled robots, we consider a version of beginner's Salsa that uses only four basic steps which we label $A,B,C$, and $D$.  (See Appendix A.) In Salsa, as in many forms of dance, each motion primitive (dance step) begins and ends in accordance with the rhythm of the music to which it is set.  Specifically, in a sequence of elementary steps making up a Salsa performance, each of the four motion primitives is executed for a period of eight beats (two bars) of the music.  In what follows, we describe the Salsa motion primitives in terms of their two-bar durations.  We assume each primitive has the dance partners standing in a standard initial pose---illustrated in Figure \ref{fig:dance} and in more detail in Appendix A.

\begin{figure}[htbp] 
   \centering
\includegraphics[width=4in]{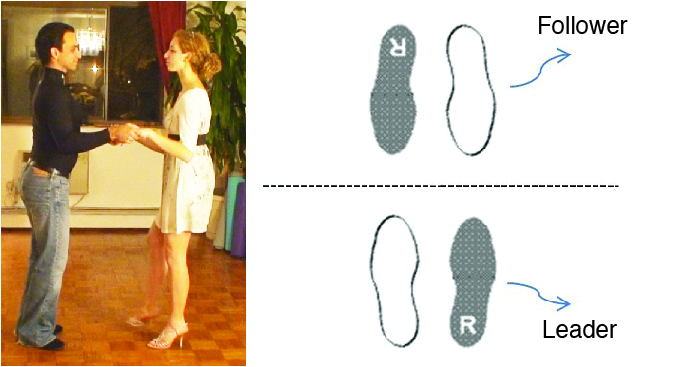}
   \caption{It is assumed that each of the four basic Salsa movements begins and ends with the dance partners facing each other with feet symmetrically placed with respect to a line as depicted.}
   \label{fig:dance}
\end{figure}

\section{The complexity and artistic merit of a dance performance}\setcounter{equation}{0}

In an attempt to understand something about how people perceive the artistic merit of a dance performance, two dancers were asked to perform a number of short Salsa segments using the for basic dance primitives in different sequences.  Digital video recordings of the Salsa segments  were shown to twenty ``judges'' who were asked to rank the performances in order of artistic merit.  The judges included both trained dancers as well as people with no formal training in dance.  All judges were instructed to use standard criteria in their rankings, including artistic conent, dance routine difficulty, partner synchronization, and  complexity of the
choreography.  Ten dance sequences, each comprised of 23 basic dance primitives were  selected to be ranked by each of the judges.  Using the motion primitives (dance steps) $A,B,C,D$ described in the previous section, the ten performances are given in Table 1.

\begin{center}
\footnotesize
\begin{tabular}{||l|c||}
\hline
Dance 1 &  $DDADBBBBACCCDDDDDBDAAAA$\\ \hline
Dance 2 & $AAAAAAAADDDDDDDDDDDBDBB$\\ \hline
Dance 3 & $ADBCDACBDADBCDABACDACBD$\\ \hline
Dance 4 & $DBCADBCADBCADBCADBCADBC$\\ \hline
Dance 5 & $ACBDACBDACBDACBDACBDACB$\\ \hline
Dance 6 & $ABCDBCDACDABDABCABADBCD$\\ \hline
Dance 7 & $DBADACBDDBABDDAACDBBDAD$\\ \hline
Dance 8 & $AAAABAAADAAAAAAACAAADAA$\\ \hline
Dance 9 & $DBCDCBBDCBDDDBDDDAABCCC$\\ \hline
Dance 10 & $DBDCCBDDBBDDDCCCCABDDDB$\\ \hline
\end{tabular}\\[0.2in]
The ten dance sequences.\\
{\bf Table 1}
\end{center}

The average scores of the twenty judges are given in the first row of Table 2. Dance sequence 9 was preferred, while almost no one liked dance number 2.  It is noted that the judges were in substantial agreement regarding dance number 2 (rated as poor) there was comparatively high variance in the judges scores on other dances.


Having thus tabulated the judges' rankings, we were led to the question of whether the artistic qualities in terms of which the performances were differentiated could be indentified in a precise and even quantitative way.  The late Dennis Dutton identified {\em complexity} as one of the four central characteristics of great art\footnote{Dutton identifies the four central characteristics of high art as 1.\ complexity, 2 serious content, 3.\  purpose, and 4. distance. (\cite{dutton})}. (\cite{dutton})
To evaluate the complexity of a sequence of symbols such as those in Table 1, we considered metrics suggested by the well-known Shannon Entropy.  The simplest possible metric may be arrived at by recording the number of occurrences of each of the symbols in the symbol set ${\cal S}=\{A,B,C,D\}$.  Each dance is exactly 23 symbols in length, and thus the relative  frequency of occurrence of the $k$-th symbol is $f_k=(\# {\rm of\ occurrences\ of}\ k-th\ {\rm symbol})/23$.  The metric
\begin{equation}
\left(\begin{array}{c}
{\rm symbol}\\
{\rm frequency}\\
{\rm complexity}\end{array}\right) = -\sum_{k=1}^4 f_k\log_2 f_k
\label{eq:jb:complexity}
\end{equation}
is then a (possibly crude) measure of the variability of the component steps that make up the dance.  Because there are only four symbols involved, the maximum value this measure could take is $\log_2 4=2$, which would be attained if each symbol appeared in the sequence equally often.  (Since the sequence lengths are all 23, this bound is never achieved.)  On the other hand, if any single symbol were to appear in all 23 places in the sequence, the complexity (\ref{eq:jb:complexity}) would have the value 0.  When the complexity metric (\ref{eq:jb:complexity}) is evaluated on the ten dance sequences of Table 1, the values are strictly between the two extremes, and they are given in row three of Table 2.

\vspace{0.1in}
\begin{table*}[ht]
\begin{center}
\footnotesize
\begin{tabular}{||c|cccccccccc||}
\hline
Dance no. & 1 & 2 & 3 & 4 & 5 & 6 & 7 & 8 & 9 & 10\\ \hline\hline
Average&&&&&&&&&&\\[-0.1in]
 score by  & 3.6  & 1.9  & 5.1  & 5.7 & 7.3 & 6 & 6 & 4.2 & 7.8 & 7.3 \\[-0.1in]
judges &(2.0)&(1.5)&(3.0)&(2.1)&(1.8)&(2.6)&(2.2)&(3.1)&(2.4)&(2.2)\\ \hline
Symbol &&&&&&&&&&\\[-0.1in]
frequency & 1.897& 1.403 & 1.985 & 1.996 & 1.996 & 1.996 & 1.848 & 0.927 & 1.848 & 1.731\\[-0.1in]
complexity &&&&&&&&&&\\ \hline
Averaged     &&&&&&&&&&\\[-0.1in]
phrase & 0.625 & 0.162 &  1.8 & 2 & 2 & 1.9 & 1.5 & 0.487 & 1.362 & 1.362\\[-0.1in]
complexity &&&&&&&&&&\\ \hline
Number &&&&&&&&&&\\[-0.1in]
of phrases & 2.322 & 1.522 & 2.322 & 0 & 0&  2.322 & 2.322 & 1.922 & 2.322  & 2.322 \\[-0.1in]
complexity &&&&&&&&&&\\ \hline
Robot &&&&&&&&&&\\[-0.1in]
dance &13727&12945&14326&14567&14547&14248&13349&13181&14627&14647\\[-0.1in]
energy &&&&&&&&&&\\ \hline\hline
\end{tabular}\\[0.2in]
\begin{flushleft}{\footnotesize There were twenty judges;  numbers in parentheses in the average score row are standard deviations.  Robot dance energy is discussed in Section 5.}\end{flushleft}
{\bf Table 2}
\end{center}
\end{table*}

\vspace{0.1in}

A simple linear regression in which the average judges' scores were regressed on the computed symbol frequency suggests only a modest correlation.  (See Fig.\ 2.)  Indeed, the value of the coefficient of correlation for the sequences is only 0.48, indicating a weak correlation.  The following section describes some refined notions of complexity that may more faithfully reflect the artistic quality of the sequences.



\begin{figure}[htbp] 
   \centering
\includegraphics[width=3.5in]{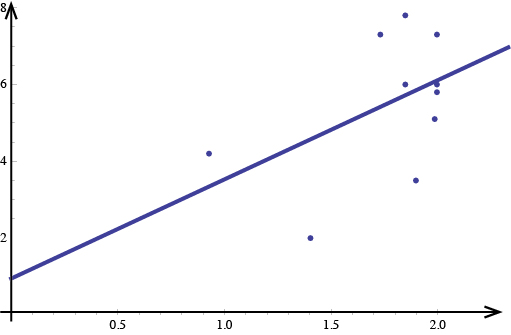}
   \caption{A scatter plot of Judges' rankings as a function of the frequency complexity using the data from Table 1.}
   \label{fig:scatter}
\end{figure}

\section{Deconstructing the dances into four-step phrases}

It is an interesting exercise to attempt to fit four-state Markov chain models to the symbol sequences of Table 1.  While the sequences are long enough and the sets of transitions are rich enough in some cases to construct such models, any model of the dance in which the next step depends only on its immediate predecessor step will probably seem a bit aimless and not reflective of the artistic quality of the sequence of steps that the dance actually contained.  As has been noted in the computer music literature, \cite{Roads}, higher order Markov chain models can be used to capture the phrasal nature of music.  While fitting higher order Markov models to the sequences of Table 1 is beyond the scope of the paper, we shall briefly examine the phrasal structure of the sequences.

\begin{center}
\tiny
\begin{tabular}{||l|c|c||}
\hline\hline
\multicolumn{2}{|c|}{ }&{\# phrases}\\ \hline
Dance 1 &  $(DDAD)(BBBB)(ACCC)(DDDD)(DBDA)AAA$& 5\\ \hline
Dance 2 & $(AAAA)(AAAA)(DDDD)(DDDD)(DDDB)DBB$& 3\\ \hline
Dance 3 & $(ADBC)(DACB)(DADB)(CDAB)(ACDA)CBD$& 5\\ \hline
Dance 4 & $(DBCA)(DBCA)(DBCA)(DBCA)(DBCA)DBC$& 1\\ \hline
Dance 5 & $(ACBD)(ACBD)(ACBD)(ACBD)(ACBD)ACB$& 1\\ \hline
Dance 6 & $(ABCD)(BCDA)(CDAB)(DABC)(ABAD)BCD$& 5\\ \hline
Dance 7 & $(DBAD)(ACBD)(DBAB)(DDAA)(CDBB)DAD$& 5\\ \hline
Dance 8 & $(AAAA)(BAAA)(DAAA)(AAAA)(CAAA)DAA$& 4\\ \hline
Dance 9 & $(DBCD)(CBBD)(CBDD)(DBDD)(DAAB)CCC$& 5\\ \hline
Dance 10 & $(DBDC)(CBDD)(BBDD)(DCCC)(CABD)DDB$& 5\\ \hline
\end{tabular}\\[0.2in]
\footnotesize
\begin{flushleft}{\footnotesize The dance sequences grouped into four-letter phrases.  The right hand column lists the number of distinct four letter phrases in the sequence, and the final three letters in each sequence were not counted as a phrases.}\\
\end{flushleft}
{\bf Table 3}
\end{center}

As noted above, each of the four motion primitives (or dance steps) is executed over a period of eight beats of music.  In Salsa, each phrase is eight musical measures in length.  Since there are four beats to a measure, it is natural to group the letters in the sequences into four letter phrases.  Several phrase centric complexity metrics can then be considered.  One such metric is based on viewing each four symbol phrase as a  complete dance sequence in its own right.  In terms of the symbol set ${\cal S}$, every four letter phrase has a complexity given by (\ref{eq:jb:complexity}) where now $f_k=(\#  $ number of occurrences of the $k$-th symbol)/4.  Clearly, there are five possible values that this phrase complexity metric can take on phrases made up of the four letters in ${\cal S}$.  They are $0,\ -\frac{1}{4}\log\frac{1}{4}-\frac{3}{4}\log\frac{3}{4}=0.811278,\ -\log\frac{1}{2}=1,\ -\frac{1}{2}\log\frac{1}{4}-\frac{1}{2}\log\frac{1}{2}=1.5,$ and $\log 4=2$ in the respective cases that all letters in the phrase are equal, three letters in the phrase are equal, there are two distinct pairs of equal letters, there are exactly three letters in the sequence, and finally in the case that there are four distinct letters in the sequence.  Based on this phrase metric, we prescribe an {\em averaged phrase complexity} metric for each of the twenty-three letter sequences.  Ignoring the final three letters in each sequence, the right hand column in Table 3 lists the number of distinct four letter phrases that make up the dance. The third row of Table 2 lists the averaged phrase complexity of the dance.

A further metric in terms of which to evaluate dance complexity is what we shall call the {\em number-of-phrases complexity}.  We omit details but note that this metric is based on the number of distinct phrases and their frequency of occurrence among the first twenty letters in each dance sequence (a number between 1 and 5).  The possible values of the number-of-phrases complexity in terms of the appropriately restated formula (\ref{eq:jb:complexity}) range between 0 and $\log_2 5\approx 2.344$.  The values taken on by this metric for our ten dances are listed in row 4 on Table 2.  Note that  while dances 4 and 5 have the highest averaged phrase complexities (being comprised of four distinct letters), they also have the lowest complexity measured in terms of number-of-phrases.  

Comparing the average judges' scores are with the averaged phrase complexity showed a discernible correlation, with the coefficient of correlation being 0.75.  On the other hand, the number-of-phrases complexity had no meaningful correlation with the judges rankings (correlation coefficient -0.099).  It is interesting to note, however, that a convex combination of these complexity metrics in which the relative weightings are 90\% averaged-phrase complexity and 10\% number-of-phrases complexity has a slightly higher value of 0.764 coefficient of correlation with the judges rankings.  This metric slightly discounts dance routines that repeat the same four steps over and over.  It is also interesting to note that both these complexity metrics are identical on and do not discriminate between dances 4 and 5, and yet the judges had a clear preference for dance 5.  There is clearly some aspect of artistic merit that is not captured by the complexity metrics.

\section{Control Energy as a Communication Metric}\setcounter{equation}{0}

A major theme in recent work of Wong and Baillieul (\cite{CDC09},\cite{Wong}, and \cite{WB}) is understanding the complexity of communicating through a control system in terms of the required control energy.  As dance requires physical exertion, it seems natural to compare the dance sequences of Section 3 in terms of the amount of energy required to perform them.  While energy data was not recorded for the human dancers who performed the ten dance routines, we have done some simple energy calculations on wheeled robots in our lab doing appropriately stylized versions of the same beginner Salsa routines.  These are tabulated in row 5 of Table 2.  The units are centimeters, and it was assumed that the energy expended was proportional to the amount the robots moved in performing the dance. For these robot replications of the given dances, the coefficient of correlation with the judges' ranking was 0.8.  Performance energy thus emerged as the metric most closely correlated with the rankings of the judges.

Up to this point, we have discussed possible ways to measure the artistic complexity of dance routines.  We now turn to the question of complexity of communication between partners in a dance performance.  The two dancers in a Salsa performance have distinctly different roles: one (usually the male) is the leader, and the other is a follower.  Initiation of motion associated with each step in the dance involves communication between the leader and follower about which of the four component steps they are going to perform next.  This communication occurs through gestures and movement.  Skilled performers blend such motion-based communication seamlessly into the execution of each step, and they are undoubtedly able to communicate effectively without undue expenditure of energy.  Energy takes on greater significance in the case that dance routines are created for pairs of mobile robots that must communicate with each other through motions that involve many fewer physical degrees of freedom than  what are available to human dancers.  The energy needed for reliable motion-based communication will typically be relatively large for a mobile robot that must simultaneously execute the  motion prescribed by the dance step, while at the same time signaling to a partner what the intended next step will be.  Early results on motion based communication in which wheeled  robots are controlled to move so as to achieve a primary motion objective while simultaneously transmitting a motion-encoded message have been reported in \cite{DJ1},\cite{DJ2}, and \cite{DJ3}.


\begin{figure}[htbp] 
   \centering
\includegraphics[width=3.5in]{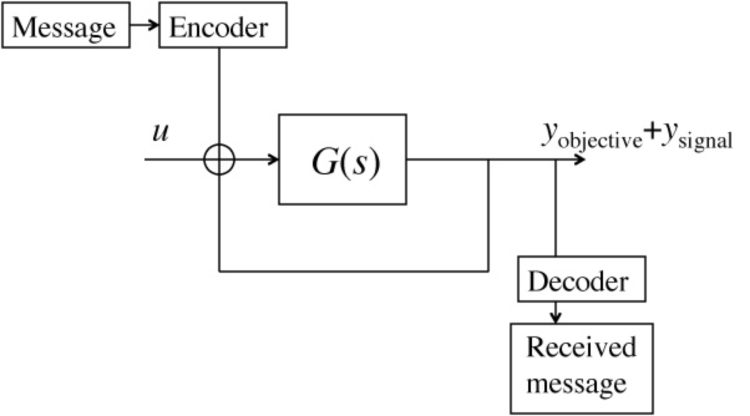}
   \caption{The primary control objective provides the {\em context} in which an overlaid message is transmitted.}
   \label{fig:ControlComm}
\end{figure}

In the theory of formal languages, notions of {\em context} are of interest in discussing both the reliability of accurate reception and the possibility of compression of a transmitted message.  Context has an even greater importance in motion-based communication.  Gestures inevitably have meanings that crucially depend on the context in which they are made.  Beyond this, the ability to transmit a meaningful motion-based signal while an agent engaged is carrying out a prescribed motion toward a given objective may depend on the nature of that objective.  A wheeled robot vehicle, for example,  may communicate by means of a short duration sinusoidal deviation from a prescribed path, but its ability to execute such a deviation will be limited if it is moving close to an obstacle.  Our goal is to understand not only how motion based signaling may depend on context, and in particular how the energy cost of communicating through the dynamics of a control system will depend on the context in which communication takes place.  In the spirit of \cite{CDC09}, we study the energy cost of communicating through a control system by formulating an optimal control problem.  The specific case of wheeled robots will be considered elsewhere, but to fix ideas about the nature of such optimization problems, we consider a problem of controlling a finite dimensional linear system with a finite number (say $m$) of inputs such that all inputs steer the system to a specified terminal state in one unit of time following paths that can be distinguished from one another. Thus each distinct path followed by the system output will correspond to a transmitted message. The precise statement is as follows:

\noindent{\bf Energy-optimal control communication through a finite-dimensional linear system:}
Find a finite set of control inputs $u_j(\cdot)$, $j=1,\dots,m$, each of which  steers
 the state of
\begin{equation}
\begin{array}{l}
\dot x=Ax+Bu\\
y=Cx
\end{array}
\label{eq:jb:basicLin}
\end{equation}
from $x(0)=0$ to $x(1)=x_1\in\mathbb{R}^n$ in one unit of time such that the corresponding output functions $y_1(\cdot),\dots,y_m(\cdot)$ satisfy
\begin{equation}
\Vert y_i-y_j\Vert_{*}\ge \epsilon>0,
\label{eq:jb:outputCrit}
\end{equation}
 for $ i\ne j$, and  such that
\begin{equation}
\eta=\int_0^1\, \sum_{j=1}^m\Vert u_j(t)\Vert^2\,dt
\label{eq:jb:Cost}
\end{equation}
is minimized.  We take the norm appearing in the control cost $\eta$ to be the standard finite dimensional Euclidean norm.  Several function space norms are  of interest for gaging the separation of the output trajectories in (\ref{eq:jb:outputCrit}), but for the purpose of the present paper, we consider the standard $L_2$ norm.

In the case that $m=1$, the requirement of (\ref{eq:jb:outputCrit}) is vacuous, and the problem reduces to a classical liner-quadratic optimal control problem.  We note that the solution to the problem as posed will have a strong dependence on the specified endpoint $x_1$.  This represents a {\em communication context}, changes in which could have a significant effect on both the cost and the solution to the communication problem.  This explicit incorporation of {\em context} in our formulation of the problem is an important step in understanding how communication  and the cost of communication between agents will depend on the respective states in which a transmitting agent ($=$ dance leader) and receiving agent ($=$ dance follower) find themselves.
For the linear control problem, we shall show that while changes in $x_1$ affect the cost, the way in which messages are optimally encoded in the motion is independent of $x_1$.
The optimal control problem (\ref{eq:jb:basicLin})-(\ref{eq:jb:Cost}) may be abstracted as follows.  

\noindent{\bf Problem A: Multiple distinguishable solutions to a linear algebra problem:}
Let $U,V,W$ be real vector spaces such that $\dim U\gg\dim V$,and $\dim U\sim\dim W$.  Let $\ell:U\to W$ and $L:U\to V$ be linear operators such that $L$ has full rank $=\dim V$.  For a given vector $x\in V$, find $m$ solutions $u_1,\dots,u_m\in U$ to the equation
\begin{equation}
Lu=x
\label{eq:jb:abstract}
\end{equation}
such that the objective function
\begin{equation}
\eta=\sum_{j=1}^m\Vert u_j\Vert^2
\end{equation}
is minimized subject to the constraint that
\begin{equation}
\Vert\ell u_i-\ell u_j\Vert = \epsilon>0
\end{equation}
for $i\ne j$.

\begin{theorem}
Suppose $W=U$, and let $\ell = I$ be the identity operator.  Assume that $\dim U-\dim V\ge m-1$.  Let ${\cal N}(L)$ denote the null-space of $L$ and let
$S = \{u\in {\cal N}(L)\ :\ \Vert u\Vert =\left( {\frac{m-1}{2m}}\right)^{\frac{1}{2}} \epsilon\}$.  Finally, let $\vec n_1,\dots,\vec n_m\in S$ be the vertices of any $m-1$-simplex.  Any solution to Problem A is of the form
\[
u_j=u_0+\vec n_j,\ j=1,\dots,m
\]
for any such  choice of $\vec n_j\in S$,  
and $u_0=L^T(LL^T)^{-1}x$.  Moreover, the optimal value of the objective function is
\[
m\,x^T(LL^T)^{-1}x + \frac{m-1}{2}\epsilon^2.
\]
\end{theorem}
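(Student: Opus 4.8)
The plan is to reduce Problem~A, under the stated hypotheses, to an elementary geometric optimization carried out inside the null space of $L$.

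First I would parametrize the feasible tuples. Since $L$ has full rank $\dim V$, the Gram matrix $LL^T$ is invertible, $u_0:=L^T(LL^T)^{-1}x$ solves $Lu_0=x$, and $u_0\in\mathrm{range}(L^T)={\cal N}(L)^\perp$. Every solution of $Lu=x$ differs from $u_0$ by a null vector, so I would write $u_j=u_0+v_j$ with $v_j\in{\cal N}(L)$. Because $\ell=I$, the separation constraint becomes $\Vert u_i-u_j\Vert=\Vert v_i-v_j\Vert=\epsilon$, involving only the $v_j$. Orthogonality of $u_0$ and each $v_j$ gives $\Vert u_j\Vert^2=\Vert u_0\Vert^2+\Vert v_j\Vert^2$ with $\Vert u_0\Vert^2=x^T(LL^T)^{-1}x$, hence $\eta=m\,x^T(LL^T)^{-1}x+\sum_{j=1}^m\Vert v_j\Vert^2$. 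Thus the first term is a fixed cost of meeting the endpoint constraint, and only $\sum_j\Vert v_j\Vert^2$ is to be minimized, over $m$-tuples in ${\cal N}(L)$ that are pairwise at distance $\epsilon$.

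Next I would handle the geometric core. Introducing the centroid $c=\frac1m\sum_j v_j$, the identity $\sum_j\Vert v_j\Vert^2=\sum_j\Vert v_j-c\Vert^2+m\Vert c\Vert^2$ together with $\sum_j\Vert v_j-c\Vert^2=\frac1{2m}\sum_{i\ne j}\Vert v_i-v_j\Vert^2$ shows that the equidistance constraint pins $\sum_j\Vert v_j-c\Vert^2$ to the value $\frac{m-1}{2}\epsilon^2$ for \emph{every} admissible configuration. Therefore $\sum_j\Vert v_j\Vert^2=\frac{m-1}{2}\epsilon^2+m\Vert c\Vert^2\ge\frac{m-1}{2}\epsilon^2$, with equality precisely when $c=0$; this is the asserted optimal value $m\,x^T(LL^T)^{-1}x+\frac{m-1}{2}\epsilon^2$, once attainability is verified. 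For the minimizers, $m$ points at common pairwise distance $\epsilon$ form a regular $(m-1)$-simplex, whose vertex-transitive symmetry group fixes the centroid, so $\Vert v_j-c\Vert$ is the same for all $j$ and equals $R$ with $R^2=\frac1m\sum_j\Vert v_j-c\Vert^2=\frac{m-1}{2m}\epsilon^2$; combined with $c=0$ this forces $\Vert v_j\Vert=\bigl(\frac{m-1}{2m}\bigr)^{1/2}\epsilon$, i.e.\ $v_j\in S$. Conversely, any regular $(m-1)$-simplex inscribed in the sphere $S$ has its centroid at the origin, is feasible, and attains the bound, and by rank--nullity ${\cal N}(L)$ has dimension $\dim U-\dim V\ge m-1$, exactly enough room to contain such a simplex. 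Setting $\vec n_j=v_j$ then gives the stated form $u_j=u_0+\vec n_j$.

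I expect the only delicate step to be the rigorous form of the assertion that $m$ equidistant points constitute a regular $(m-1)$-simplex whose centroid coincides with its circumcenter: one must check that the $v_j$ are affinely independent (so the simplex is genuinely $(m-1)$-dimensional, which is what makes the dimension hypothesis tight) and that the symmetry argument indeed forces $\Vert v_j-c\Vert$ to be constant. A self-contained alternative is a direct Gram-matrix computation: imposing $c=0$ and $\Vert v_i-v_j\Vert^2=\epsilon^2$ yields $v_i\cdot v_j=-\epsilon^2/(2m)$ for $i\ne j$ and $\Vert v_j\Vert^2=\frac{m-1}{2m}\epsilon^2$, from which membership in $S$, feasibility, and the dimension bound all follow by linear algebra. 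The remaining steps are routine.
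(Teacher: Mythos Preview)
Your argument is correct, and it reaches the same decomposition $u_j=u_0+\vec n_j$ with $u_0=L^T(LL^T)^{-1}x$ and $\vec n_j\in{\cal N}(L)$ as the paper, but the route to the geometric core is genuinely different. The paper adjoins the constraints with Lagrange multipliers, derives the critical-point equations, multiplies through by $L$ to force all $\lambda_j$ equal, sums to obtain $\sum_j u_j=m\,u_0$ (i.e.\ the centroid condition), and then appeals to an external result of Blumenthal and Wahlin on the smallest sphere enclosing a bounded set to identify the optimal radius. Your argument sidesteps both the Lagrangian machinery and the external citation: the moment identity $\sum_j\Vert v_j-c\Vert^2=\frac{1}{2m}\sum_{i\ne j}\Vert v_i-v_j\Vert^2$ shows directly that the equidistance constraint freezes the second moment about the centroid at $\frac{m-1}{2}\epsilon^2$, so only $m\Vert c\Vert^2$ is variable and $c=0$ is both necessary and sufficient for optimality. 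This is more elementary and more transparent---in particular it makes clear that \emph{every} admissible equidistant configuration has the same sum of squared distances to its centroid, so the only optimization content is centering. Your closing Gram-matrix remark (that $c=0$ and $\Vert v_i-v_j\Vert^2=\epsilon^2$ force $v_i\cdot v_j=-\epsilon^2/(2m)$ and $\Vert v_j\Vert^2=\frac{m-1}{2m}\epsilon^2$) is the clean way to discharge the ``delicate step'' you flag, and supplies affine independence as well, so I would promote it from an aside to the main line of the proof.
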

\begin{proof}
Adjoin the problem constraints to the objective function in the usual way with Lagrange multipliers to get the modified objective function
\[
\sum_{j=1}^m\Vert u_j\Vert^2 + \sum_{j=1}^m \lambda_j^T(Lu_j-x) + \sum_{i<j}\mu_{ij}(\Vert u_i-u_j\Vert^2-\epsilon^2).
\]
The critical point equations are written
\[
2u_j+L^T\lambda_j+2\sum_{i\ne j}\mu_{ij}(u_j-u_i)=0,\ \ j=1,\dots,m.
\]
Multiplying all equations by $L$, we find that all $\lambda_j=-2(LL^T)^{-1}x$.  Plugging this into the critical point equations and summing, we obtain $\sum_{j=1}^m u_j=m\,L^T(LL^T)^{-1}x$.  Note that we can write $u_j=u_0+\vec n_j$ where $u_0=L^T(LL^T)^{-1}x$ and $\vec n_j\in{\cal N}(L),\ j=1,\dots,m$.  

Since $\vec n_j\perp u_0$, it is easy to see that minimizing $\sum \Vert u_j\Vert^2$ is equivalent to minimizing 
\[
\sum_{j=1}^m\Vert u_0+\vec n_j\Vert^2= n\,\Vert u_0\Vert^2 + \sum_{j=1}^m\Vert\vec n_j\Vert^2
\]
subject to $\Vert \vec n_i - \vec n_j\Vert = \epsilon$ for all $i\ne j$.  Since $u_0$ is fixed, the problem is thus equivalent to minimizing $\sum_{j=1}^m\,\Vert\vec n_j\Vert^2$ subject to $\Vert \vec n_i-\vec n_j\Vert = \epsilon$ for $i\ne j$.  These constraints specify that $\vec n_1,\dots \vec n_m$ are vertices of an $m-1$-simplex centered at $0\in U$ and having diameter $\epsilon$.  It was shown in \cite{blumenthal} that the sphere of smallest radius $r$ that contains the vertices $\vec n_1,\dots,\vec n_m$ has
\[
r=\left(\frac{m-1}{m}\right)^{\frac{1}{2}}\epsilon.
\]
Hence, the optimal choices of $\vec n_j$ have 
\[
\Vert \vec n_j\Vert^2=\frac{m-1}{2m}\epsilon^2,
\]
and therefore the optimizing solution to the problem yields a minimum value of
\[
m\,\Vert u_0\Vert^2+m \Vert \vec n_j\Vert^2=
m\,x^T(LL^T)^{-1}x +\frac{m-1}{2}\epsilon^2.
\]
\end{proof}

While the solution to the optimization problem given in this theorem highlights the general characteristics of solutions to the energy optimal communication problem specified by (\ref{eq:jb:basicLin})-(\ref{eq:jb:outputCrit}), there are no comparably explicit expressions for the solution to the control problems.  The current state of knowledge in the case of integrator systems is given by the following.

\begin{proposition}
Consider positive integers $n\ge m$.  The energy-optimal problem of finding $m$ distinguishable solutions steering an $n$-th order integrator, $x^{(n)}=u,\, y=x$ between fixed endpoints $x(0)=x^{\prime}(0)=\cdots=x^{(n-1)}(0)=0$ and $x(1)=x_1,x^{'}(1)=x_2\cdots,x^{(n-1)}(1)=x_n$ is equivalent to a problem of finding least squares optimal solutions with separation constraints to a finite dimension system of linear equations.  Specifically, let $N\ge m+n-2$.  There is a one-one equivalence between solutions to the energy-optimal communication problem for the state space representation of the $n$-th order integrator and the problem of finding solutions $\vec a_1,\dots,\vec a_m\in\mathbb{R}^N$ to 
\[
L\vec a=\vec x =(x_1,\dots,x_n)^T
\]
($\vec x$ the terminal endpoint of the integrator problem) such that the objective function
\[
\eta=\sum_{j=1}^n \vec a_j^TQ\,\vec a_j
\]
is minimized subject to the distinguishability constraint
\[
(\vec a_i-\vec a_j)^TR\,(\vec a_i-\vec a_j)=\epsilon.
\]
$Q$ and $R$ are $(N+1)\times (N+1)$ symmetric positive definite matrices and $L:\mathbb{R}^{N+1}\to\mathbb{R}^n$.  These are given explicitly in the constructive proof.\label{MainProposition}
\end{proposition}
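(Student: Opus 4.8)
The plan is to turn the continuous-time optimal control problem into a constrained least-squares problem over a fixed finite basis, and then to check that the three ingredients of the communication problem --- the endpoint constraint, the control energy, and the $L_2$ separation of the outputs --- become, respectively, a linear map $L$, a quadratic form $Q$, and a quadratic form $R$ with the asserted properties. Fix $N\ge m+n-2$. For the $n$-th order integrator with zero initial data one has $y=x=\Phi u$, where $\Phi$ is the $n$-fold integration operator $(\Phi u)(t)=\int_0^t\frac{(t-s)^{n-1}}{(n-1)!}u(s)\,ds$ on $L_2[0,1]$, while the terminal conditions read $x^{(i-1)}(1)=\int_0^1\frac{(1-s)^{n-i}}{(n-i)!}u(s)\,ds$ for $i=1,\dots,n$; call this endpoint map $\mathcal E:L_2[0,1]\to\mathbb{R}^{n}$. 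First I would choose the $(N+1)$-dimensional control subspace $\mathcal U_N\subset L_2[0,1]$ to be the $n$-dimensional space $\mathcal P_{n-1}$ of polynomials of degree $\le n-1$ together with the top $N+1-n$ eigenfunctions of the compact, self-adjoint, positive operator $T=P\,\Phi^*\Phi\,P$, where $P$ is the orthogonal projection of $L_2[0,1]$ onto $\ker\mathcal E$ (the justification of this particular choice is postponed to the last paragraph). Fixing a basis $\phi_0,\dots,\phi_N$ of $\mathcal U_N$ identifies each control $u_j=\sum_k a_{jk}\phi_k$ with a vector $\vec a_j\in\mathbb{R}^{N+1}$.

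With this identification the translations are routine. The endpoint conditions $x(1)=x_1,\dots,x^{(n-1)}(1)=x_n$ become $L\vec a_j=\vec x$ with $L_{ik}=\int_0^1\frac{(1-s)^{n-i}}{(n-i)!}\phi_k(s)\,ds$; because $\mathcal U_N\supset\mathcal P_{n-1}$ these $n$ moment functionals are independent, so $L$ has full rank $n$ and $\dim\mathcal N(L)=N+1-n\ge m-1$, which is exactly the room an $(m-1)$-simplex needs, matching the hypothesis $\dim U-\dim V\ge m-1$ of the preceding theorem. The energy is $\int_0^1 u_j(t)^2\,dt=\vec a_j^T Q\,\vec a_j$ with $Q_{kl}=\int_0^1\phi_k(t)\phi_l(t)\,dt$, the Gram matrix of the basis, which is symmetric positive definite since the $\phi_k$ are independent. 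For the outputs, $y_j=\Phi u_j=\sum_k a_{jk}\,\Phi\phi_k$, so $\Vert y_i-y_j\Vert_{L_2}^2=(\vec a_i-\vec a_j)^T R\,(\vec a_i-\vec a_j)$ with $R_{kl}=\langle\Phi\phi_k,\Phi\phi_l\rangle_{L_2}$; $R$ is symmetric positive definite because $\Phi$ is injective (an iterated indefinite integral has trivial kernel), hence the functions $\Phi\phi_k$ are again independent. These $L$, $Q$, $R$ are the explicit matrices promised in the statement, and with them the communication problem, restricted to controls in $\mathcal U_N$, is word for word ``minimize $\sum_{j=1}^m\vec a_j^T Q\vec a_j$ subject to $L\vec a_j=\vec x$ and $(\vec a_i-\vec a_j)^T R(\vec a_i-\vec a_j)=\epsilon^2$'', and the correspondence $\{u_j\}\leftrightarrow\{\vec a_j\}$ is a bijection between the two solution sets. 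Substituting $\vec b_j=Q^{1/2}\vec a_j$ makes the objective Euclidean and exhibits this as an instance of Problem A, but with $\ell\ne I$ because $R$ need not be proportional to $Q$; this is why the explicit round-simplex solution of the preceding theorem does not carry over and the encoding problem remains open.

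The step that requires genuine work --- and the hard part --- is showing that restricting the controls to $\mathcal U_N$ is without loss, i.e., that a minimizer of the original infinite-dimensional problem already lies in $\mathcal U_N$, with $\mathcal U_N$ depending only on $n$ and $m$. I would extract this from the stationarity conditions, adjoining multipliers $\lambda_j$ for the endpoint constraints and $\mu_{ij}=\mu_{ji}$ for the separation constraints exactly as in the proof of the preceding theorem. Summing the stationarity equations, the $\mu$-terms cancel in pairs, so $\sum_j u_j$ lies in the range of $\mathcal E^*$, which is $\mathcal P_{n-1}$; hence the common part $\bar u=\frac1m\sum_j u_j$ is the minimum-energy polynomial control $\mathcal E^*(\mathcal E\mathcal E^*)^{-1}\vec x$. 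Subtracting stationarity equations pairwise shows each difference $u_i-u_j$ lies in $\mathcal P_{n-1}+\Phi^*\Phi(W)$, where $W$ is the span of all the differences; since also $W\perp\mathcal P_{n-1}$ (each $u_i-u_j$ is in $\ker\mathcal E$), projecting onto $\ker\mathcal E$ gives $W\subseteq T(W)$, and finite-dimensionality forces $T(W)=W$, so $W$ is a $T$-invariant subspace of dimension at most $m-1$ and is therefore spanned by eigenfunctions of $T$. Consequently every optimal $u_j=\bar u+(u_j-\bar u)$ lies in $\mathcal P_{n-1}\oplus W$, of dimension at most $m+n-1=N+1$ when $N=m+n-2$, and this subspace is contained in $\mathcal U_N$ provided $W$ is spanned by the \emph{top} eigenfunctions of $T$. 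That last clause is the delicate point: after removing the fixed cost $m\Vert\bar u\Vert^2$, the problem becomes ``minimize $\sum_j\Vert u_j-\bar u\Vert^2$ over $u_j-\bar u\in\ker\mathcal E$ subject to $\Vert\Phi(u_i-u_j)\Vert=\epsilon$'', and minimizing the Euclidean norms of the differences against a fixed $\Phi$-weighted (hence $T$-weighted) separation does drive them into the large-eigenvalue directions of $T$; but because the simplex constraint couples the differences, proving that the optimum actually uses only the top eigenfunctions --- equivalently, pinning down $\mathcal U_N$ canonically so the ``one--one'' claim is literal --- is precisely the mismatched-metric simplex-packing problem flagged in the abstract, and it is the part I expect to be hardest. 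For any larger $N$ one simply appends more eigenfunctions of $T$ to the basis, without changing the optimal value.
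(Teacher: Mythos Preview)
Your framework is correct in spirit, but it diverges from the paper in two important ways, and you are working much harder than the proposition actually requires.

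First, the paper's ``constructive proof'' is far more elementary than yours. It simply \emph{declares} the finite-dimensional control space to be the polynomials of degree $\le N$ in the monomial basis, $u(s)=a_0+a_1s+\cdots+a_Ns^N$, and then computes $L$, $Q$, $R$ by direct integration against $t^{[N]}=(1,t,\dots,t^N)^T$. This yields completely explicit entries: $Q$ is the $(N{+}1)\times(N{+}1)$ Hilbert matrix $Q_{ij}=\frac{1}{i+j-1}$, the $k\ell$-entry of $R$ is $\frac{(2(n-1))!\,(k+\ell)!}{(2n-1+k+\ell)!}$, and $L$ comes from expanding $\int_0^1\frac{(1-s)^{n-j}}{(n-j)!}s^{k-1}\,ds$. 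Your choice of basis --- $\mathcal P_{n-1}$ augmented by eigenfunctions of $T=P\Phi^*\Phi P$ --- is legitimate and your Gram-matrix descriptions of $L,Q,R$ are correct in the abstract, but they are not the explicit matrices the proposition promises, and the eigenfunctions of $T$ are not available in closed form the way monomials are.

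Second, and more importantly, the entire last paragraph of your proposal --- the effort to show that the $L_2$ optimum already lives in a canonical $(N{+}1)$-dimensional subspace, via stationarity, $T$-invariance of the span of differences, and then the ``top eigenfunctions'' claim --- is simply absent from the paper. The paper does not attempt to justify the restriction to polynomials; it takes polynomial controls as the starting hypothesis and proves the equivalence \emph{within} that class. So the step you correctly identify as ``the hard part,'' and on which your argument is admittedly incomplete, is not part of what the proposition (as the authors read it) asserts. Your instinct that this is where the real difficulty lies is sound --- indeed, the paper immediately follows the proposition with remarks saying the resulting encoding subproblem is open and that no closed-form solution comparable to Theorem~5.1 is known --- but that difficulty is deliberately left outside the proposition itself. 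If you drop the eigenfunction basis in favor of monomials and drop the attempt to descend from $L_2$, your first two paragraphs already constitute the paper's proof.
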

\begin{proof}
The state space rendering on the integrator system is obtained by writing $x_j(t)=x^{(j-1)}(t),\ j=1,\dots,n$.  Then, since the problem assumes $x_j(0)=0$, we can immediately express $x_j$ in terms of the control $u$:
\[
x_j(t)=\int_0^t\frac{(t-s)^{n-j}}{(n-j)!}u(s)\,ds.
\]
Consider controls that are polynomials: $u(s)=a_0+a_1s+\cdots+a_Ns^N$.  Then an elementary but lengthy calculation yields
\begin{equation}
x_j(1)=\frac{1}{(n-j)!}\sum_{k=1}^{N+1}\sum_{i=1}^{n-j}(-1)^i{{n-j}\choose{i}}\frac{1}{k+i}a_{k-1}.
\label{eq:jb:operatorL}
\end{equation}
From (\ref{eq:jb:operatorL}) one can immediately write an expression for each entry in a matrix representation of the operator $L$.

A similar calculation shows that
\[
\int_0^1(\vec a\cdot t^{[N]})^2\,dt = \vec a^TQ\,\vec a
\]
where $\vec a = (a_0,a_1,\dots.a_N)^T$, and $t^{[N]}=(1,t,t^2,\dots,t^N)^T$, and $Q$ is the $(N+1)\times(N+1)$ symmetric matrix whose $ij$-th entry is $\frac{1}{i+j-1}$.

With somewhat more effort, we can develop an explicit expression for $R$.  We note that in the expansion of the expression
\[
\int_0^1\left[(1-s)^{n-1}(b_0+b_1s+\cdots + b_Ns^N)\right]^2\,ds,
\]
the term in which the product $b_kb_{\ell}$ appears is
\[
b_kb_{\ell}\int_0^1\,\sum_{i=0}^{n-1}\sum_{j=0}^{n-1}(-1)^{i+j}{{n-1}\choose{i}}{{n-1}\choose j}s^{i+j+k+\ell}\,ds.
\]
Carrying out the elementary integration step, and using combinatorial identities, one can show that this quantity is equal to
\[
b_kb_{\ell}\frac{(2(n-1))!(k+\ell)!}{(2n-1+k+\ell)!}.
\]
This defines the $k\ell$-element of the symmetric matrix $R$ and completes the proof of the proposition.
\end{proof}

This result shows in an explicit way that the energy-optimal control communication problem for the $n$-th order integrator can be restated as a distinguishable-solutions to linear equations problem: With $L,Q,R$ as described in the proof of Proposition \ref{MainProposition}, find $m$ solutions $\vec a_1,\dots,\vec a_m$ to
\[
L\vec a = \vec x
\]
such that the objective function
\[
\eta=\sum_{j=1}^m\vec a_j^TQ\,\vec a_j
\]
is minimized subject to the constraints
\[
(\vec a_i-\vec a_j)^TR\,(\vec a_i-\vec a_j)=\epsilon,\ (i\ne j).
\]

The problem can be solved very much along the lines of the proof of Theorem 5.1.  We briefly sketch this in order to point out common features as well as an important difference.  First, note that the solution $\vec a_0$ to $L\vec a=\vec x$ which minimizes $\vec a^TQ\,\vec a$ is given by the weighted pseudo inverse:
\[
\vec a_0 = Q^{-1}L^T(LQ^{-1}L^T)^{-1}\vec x.
\]
Following exactly the same reasoning as in the earlier proof, the $m$ solution that minimize $\eta$ while satisfying the distinguishability constraint are of the form
\[
\vec a_j=Q^{-1}L^T(LQ^{-1}L^T)^{-1}\vec x + \vec n_j,\  j=1,\dots,m
\]
where the $\vec n_j$ lie in the null space of $L$ and minimize $\sum_{j=1}^m\vec n_j^TQ\,\vec n_j$ subject to $(\vec n_i-\vec n_j)^TR\,(\vec n_i-\vec n_j)=\epsilon$.
The minimal value of the objective function for the problem is then given by
\begin{equation}
\eta_0 = m\vec x^T(LQ^{-1}L^T)^{-1}\vec x + \sum_{j=1}^m \vec n_j^TQ\vec n_j.
\label{eq:jb:OptCost}
\end{equation}

\begin{remark}
For the problem as stated, we are unaware of a solution of closed form simplicity comparable to what was given in Theorem 5.1.  The way in which the value of $\eta_0$ depends on the norms used to specify the distinguishability constraint is the subject of future research.
\end{remark}
\begin{remark}
The cost component of the optimal $\eta_0$ that is due to $\vec x$ is independent of the ``message'' components of the solution and depends only on the magnitude of $\vec a_0 = Q^{-1}L^T(LQ^{-1}L^T)^{-1}\vec x$.  While this {\em separation property} will apparently hold in general for energy-optimal control communication problems posed for linear systems, we do not expect such a clean separation between a primary objective and the cost of overlaying a motion-encoded message in nonlinear systems.  Understanding control communication in nonlinear problems is a topic of current research. 
\end{remark}


\begin{corollary}
For each $N\ge n$, let $L_{N+1}:\mathbb {R}^{(N+1)}\to\mathbb{R}^n$ and $Q_{N+1}=\left(\frac{1}{i+j-1}\right)_{i,,j=1}^{N+1}$ be as constructed in the proof of Proposition 5.1.  Then
\begin{description}
\item (i) $Q_{N+1}^{-1}L_{N+1}^T = \left[\begin{array}{c}
Q_{n}^{-1}L_{n}^T\\
\cdots \\[-0.14in]
 0_{(N-n+1)\times n}
\end{array}\right]$ where $0_{(N-n+1)\times n}$ is an $(N-n+1)\times n$ submatrix all of whose entries are $0$, and 
\item (ii) Taking the domain of the mapping $L_{N+1}$ to be the space $\mathbb{R}_N[s]$ of all real polynomials of degree $\le N$, and defining the projection operator
\[
\begin{array}{c}
P_{N+1} = \\
I_{N+1}-Q_{N+1}^{-1}L_{N+1}^T(L_{N+1}Q_{N+1}^{-1}L_{N+1}^T)^{-1}L_{N+1}
\end{array}
\]
we see that $P_{N+1}$ is a projection of $\mathbb{R}_N[s]$ onto the polynomial subspace spanned by shifted Legendre polynomials (defined on $[0,1]$) of degrees $k$, where $n\le k\le N$.
\end{description}
\end{corollary}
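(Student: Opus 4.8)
The plan is to strip away the matrix notation and recognize that the matrices $Q_{N+1}$ and $L_{N+1}$ constructed in the proof of Proposition~\ref{MainProposition} merely encode ordinary $L_2[0,1]$ geometry on the space of polynomials; once that dictionary is in place, both parts reduce to elementary facts about polynomial degrees and orthogonal polynomial families, with no estimates involved.

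I would begin from two identifications. First, $Q_{N+1}=(1/(i+j-1))_{i,j=1}^{N+1}$ is exactly the Gram matrix of the monomials $1,s,\dots,s^N$ for the inner product $\langle f,g\rangle=\int_0^1 f(s)g(s)\,ds$, so if $\vec a,\vec b$ are the coefficient vectors of polynomials $u,v\in\mathbb{R}_N[s]$ then $\vec a^TQ_{N+1}\vec b=\langle u,v\rangle$: the $Q_{N+1}$-inner product on coefficient vectors \emph{is} the $L_2[0,1]$ inner product on polynomials. Second, the identity $x_j(1)=\int_0^1\frac{(1-s)^{\,n-j}}{(n-j)!}\,u(s)\,ds$ used to build $L$ shows that the $j$-th component of $L_{N+1}\vec a$ equals $\langle u,q_j\rangle$, where $q_j(s)=\frac{(1-s)^{\,n-j}}{(n-j)!}$ has degree $n-j\le n-1$. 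Letting $B$ denote the $n\times(N+1)$ matrix whose $j$-th row is the coefficient vector of $q_j$, we then have $L_{N+1}=BQ_{N+1}$, and since every $q_j$ lies in $\mathbb{R}_{n-1}[s]$ it follows that $B=\begin{bmatrix}B_n & 0_{n\times(N-n+1)}\end{bmatrix}$ for an $n\times n$ block $B_n$ that does not depend on $N$. (Because the $q_j$ have distinct degrees they are independent, so $B$---and hence $L_{N+1}=BQ_{N+1}$---has full rank $n$, which legitimizes the inverse appearing in part~(ii).) Part~(i) is now immediate: $Q_{N+1}^{-1}L_{N+1}^T=B^T=\begin{bmatrix}B_n^T\\ 0_{(N-n+1)\times n}\end{bmatrix}$, and the identical computation applied to the pair $L_n,Q_n$ (the construction for polynomials of degree $\le n-1$) gives $Q_n^{-1}L_n^T=B_n^T$, so the top block of $Q_{N+1}^{-1}L_{N+1}^T$ is precisely $Q_n^{-1}L_n^T$.

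For part~(ii) I would first verify the standard weighted-least-squares facts that $P_{N+1}$ is idempotent, fixes $\ker L_{N+1}$ pointwise, and annihilates $\mathrm{Range}(Q_{N+1}^{-1}L_{N+1}^T)$; a dimension count then identifies $P_{N+1}$ as the projection onto $\ker L_{N+1}$ along $\mathrm{Range}(Q_{N+1}^{-1}L_{N+1}^T)$. Since $\langle Q_{N+1}^{-1}L_{N+1}^Tw,\,v\rangle_{Q_{N+1}}=w^TL_{N+1}v=0$ for every $v\in\ker L_{N+1}$, this projection is $Q_{N+1}$-orthogonal, hence---by the first identification above---the $L_2[0,1]$-orthogonal projection of $\mathbb{R}_N[s]$ onto $\ker L_{N+1}$. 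By the second identification, $\ker L_{N+1}=\{u\in\mathbb{R}_N[s]:\langle u,q_j\rangle=0,\ j=1,\dots,n\}$, and since $q_1,\dots,q_n$ span $\mathbb{R}_{n-1}[s]$ this kernel is exactly the $L_2[0,1]$-orthogonal complement of $\mathbb{R}_{n-1}[s]$ inside $\mathbb{R}_N[s]$. The defining grading property of the shifted Legendre polynomials---$\deg\tilde P_k=k$ and $\tilde P_k\perp\mathbb{R}_{k-1}[s]$---makes $\{\tilde P_n,\tilde P_{n+1},\dots,\tilde P_N\}$ an orthogonal family of cardinality $N-n+1=\dim\mathbb{R}_N[s]-\dim\mathbb{R}_{n-1}[s]$, hence a basis of that complement, which gives the asserted description of $\mathrm{Range}(P_{N+1})$.

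The one genuinely load-bearing step is the reinterpretation carried out above: one must confirm that the operator $L_{N+1}$ coming out of the lengthy double sum in the proof of Proposition~\ref{MainProposition} really does factor as $BQ_{N+1}$ with $B$ supported on its first $n$ columns---equivalently, that the $j$-th output functional is integration against $(1-s)^{\,n-j}/(n-j)!$ and therefore sees only the projection of $u$ onto $\mathbb{R}_{n-1}[s]$. Granting that factorization, everything else is just the weighted-pseudoinverse/Gram-matrix dictionary together with the degree-grading of orthogonal polynomials, and nothing harder is required.
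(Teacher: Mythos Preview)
Your argument is correct. The factorization $L_{N+1}=BQ_{N+1}$ with $B=[\,B_n\ \ 0\,]$ is exactly the right observation, and once it is made both parts follow cleanly: part~(i) because $Q_{N+1}^{-1}L_{N+1}^T=B^T$ (using symmetry of $Q_{N+1}$) and the same identity with $N$ replaced by $n-1$ gives the top block; part~(ii) because the weighted projection $P_{N+1}$ is then the $L_2[0,1]$-orthogonal projection onto $\ker L_{N+1}=\big(\mathbb{R}_{n-1}[s]\big)^{\perp}\cap\mathbb{R}_N[s]$, which the degree grading of the shifted Legendre family identifies with $\mathrm{span}\{\tilde P_n,\dots,\tilde P_N\}$. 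Your check that $L_{N+1}$ has full row rank (distinct degrees of the $q_j$) is needed to justify the inverse $(L_{N+1}Q_{N+1}^{-1}L_{N+1}^T)^{-1}$, and you supply it.

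As for comparison with the paper: the paper explicitly omits the proof of this corollary, offering only the remark that it makes transparent why the least-squares optimal polynomial control for the $n$-th order integrator has degree $n-1$. Your reinterpretation of $Q_{N+1}$ as the monomial Gram matrix and of $L_{N+1}$ as integration against the kernels $(1-s)^{n-j}/(n-j)!$ is precisely the ``transparent demonstration'' the authors allude to, so there is no substantive divergence in approach---you have simply supplied what the paper leaves to the reader.
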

\begin{flushright}
$\Box$
\end{flushright}
We omit the proof, but note that this provides transparent demonstration that the least squares solution to the stated quadratic optimal control problem for the $n$-th order integrator restricted to polynomial control inputs is a polynomial of degree $n-1$.

\begin{remark}
The result of Proposition 5.1 is that if $N-n\ge m-2$, $m$ symbols may be encoded by control trajectories of an $n$-th order integrator with quadratic optimal cost given by (\ref{eq:jb:OptCost}) where we have chosen $\vec n_1,\dots,\vec n_m$ such that they solve the following associate subproblem:

\noindent{\bf Integrator Message Encoding Subproblem:} Given symmetric positive definite matrices $Q_{N+1}$ and $R_{N+1}$ as defined in the proof of Prop.\ 5.1, find $m$ vectors $\vec n_j\in{\cal N}(L_{N+1}$ such that
\[
\sum_{j=1}^m\vec n_j^TQ_{N+1}\,\vec n_j
\] 
is minimized subject to the separation constraint
\[
(\vec n_i-\vec n_j)^TR_{N+1}\,(\vec n_i-\vec n_j)=\epsilon
\]
holding for all $i\ne j$.

At present, this subproblem appears to be quite challenging, and the way in which its solution depends on $m$ and $N$ (for fixed $n$) is an open question.  
\end{remark}

\section{Conclusion}\setcounter{equation}{0}
This paper has described work to deconstruct a simple dance form (beginners' Salsa) to create a motion library from which dance routines can be choreographed and analyzed.  Various figures of merit including {\em phrase complexity} and {\em total expended energy} were applied to ten actual dance routines performed by a pair of dancers.  Rankings of the performances based on these metrics were compared with the rankings of twenty judges.  The metric that was most closely correlated with the average ranking of the judges was expended energy.  Motivated by this observation as well as our own prior work using energy as a figure of merit in the study of control communication complexity (\cite{WB},\cite{WB1}), we formulated a problem of minimum energy communication through a finite dimensional linear control system.

More specifically, we have treated the problem of finding $m$ control inputs $u_1,\dots,u_m$, all of which steer the state of a finite dimensional control system (\ref{eq:jb:basicLin}) between prescribed endpoints so as to minimize the {\em average control energy} (\ref{eq:jb:Cost}) subject to satisfying a distinguishability constraint (\ref{eq:jb:outputCrit}).  We call this the {\em problem of optimal communication through the nullspace of a linear system}.  The extent to which it informs a theory of communication using the controlled motions of mobile robots engaged in group activities such as dance is the focus of current research.  The problem is of interest in its own right, however.  We showed that the problem of steering the state of an $n$-th order integrator $x^{(n)}=u$ from the origin ($x^{(j)}(0)=0, j=0\dots,n-1$) to a prescribed endpoint $x(1)=x_1,x^{\prime}(1)=x_2,\dots,x^{(n-1)}(1)=x_n$ in such way that the average control energy is minimized subject to an output distinguishability constraint
\[
\int_0^1 \left(y_i(t)-y_j(t)\right)^2\,dt=\epsilon^2\ \ {\rm for}\ i\ne j
\] 
can be restated as the problem of finding $m$ polynomials $p_j(s)=\vec a_j\cdots^{[N}]$ of degree $N\ge n$ such that the coefficient vectors $\vec a_j$ minimize the value of a quadratic form $\vec a^TQ_{N+1}\,\vec a$ subject to the separation constraint $(\vec a_i-\vec a_j)^TR_{N+1}(\vec a_i-\vec a_j)=\epsilon^2$.  The solution to this problem and the way that it depends on the values of $m$ and $N$ remains an open question. We conjecture that nullspace communication problems formulated for other classes of finite dimensional linear systems will have similar restatements in terms of appropriate classes of special functions and solutions to problems of optimal packing of geometric objects.

\newpage
\appendices
\section{}\setcounter{equation}{0}
\setcounter{section}{8}

\begin{figure}[htbp] 
   \centering
\includegraphics[width=3.7in]{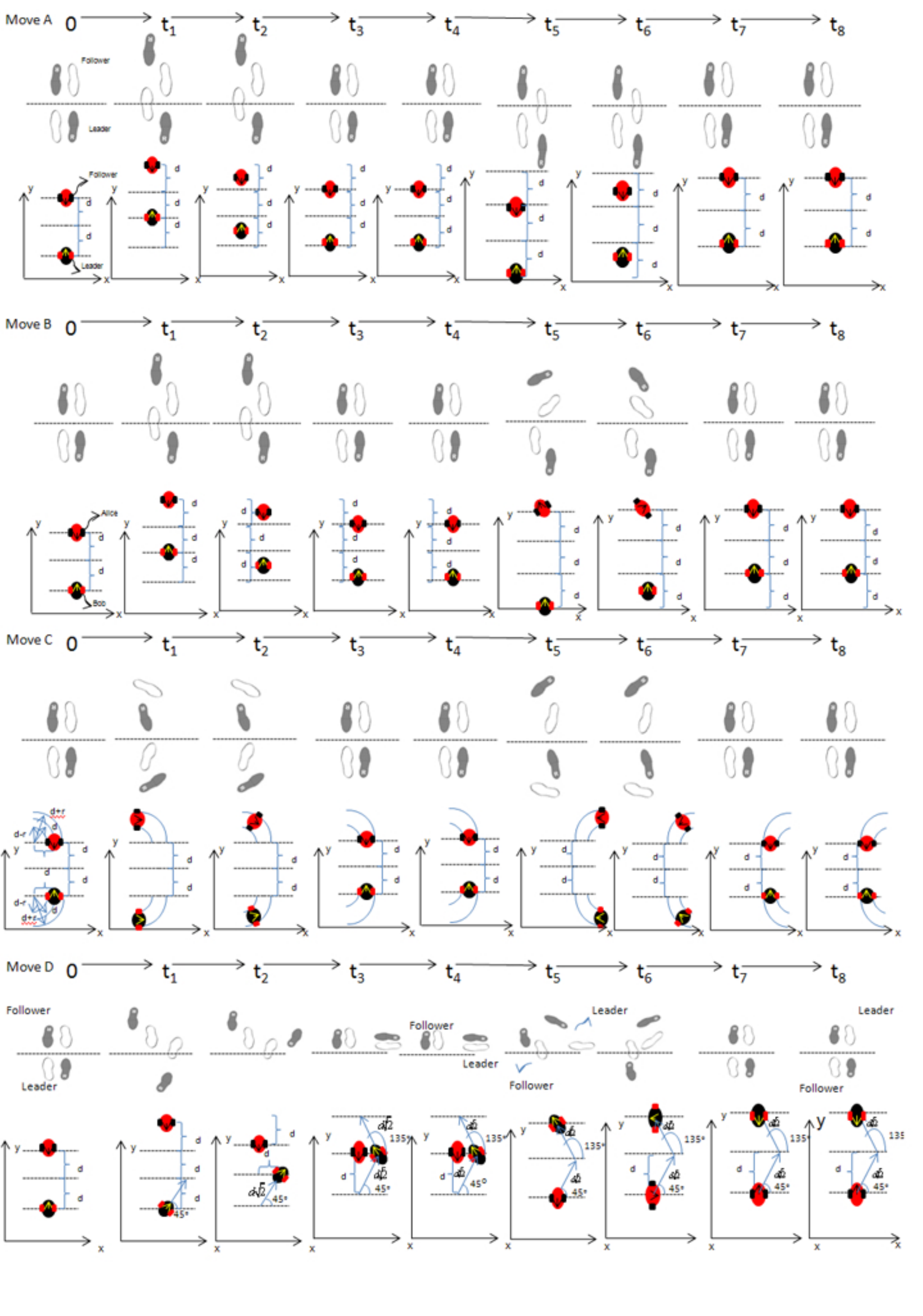}
   \caption{Basic Salsa---four dance steps: from top to bottom A,B,C,D}
   \label{fig:StepsAB}
\end{figure}

\end{document}